\documentclass[journal,twoside,web]{ieeecolor}
\usepackage{generic}

\usepackage{amsmath,amsthm,amssymb,paralist,subfigure,cite,graphicx,color,amsbsy,float}
\usepackage{stmaryrd,mathrsfs,url}
\usepackage[table]{xcolor}
\usepackage{cuted,mathtools,lipsum}
\usepackage[ruled,vlined,linesnumbered]{algorithm2e}
\usepackage{soul}

\usepackage{pifont}
\usepackage[utf8]{inputenc} 
\usepackage{hyperref}       
\hypersetup{
	colorlinks=true,
	linkcolor=cyan,
	filecolor=mnodea,
	urlcolor=cyan,
	citecolor=lime,
}
\usepackage{url}            
\usepackage{booktabs}       
\usepackage{amsfonts,amsmath,amssymb}       
\usepackage{amsthm}     
\usepackage{nicefrac}       
\usepackage{microtype}      
\usepackage{graphicx}
\usepackage{dsfont}

\newtheorem{lem}{Lemma}
\newtheorem{ass}{Assumption}
\newtheorem{theorem}{Theorem}

\def\mb{\mathbf}
\def\mbb{\mathbb}
\def\mc{\mathcal}

\DeclareMathOperator*{\argmin}{argmin}


\begin{document}
	\title{\bf \LARGE Logarithmically Quantized Distributed Optimization over Dynamic Multi-Agent Networks}
	
	\author{Mohammadreza Doostmohammadian, Sérgio Pequito
		\thanks{		
			M. Doostmohammadian is with the Mechatronics Department, Faculty of Mechanical Engineering, Semnan University, Semnan, Iran {\texttt{doost@semnan.ac.ir}}.
			 S. Pequito is with Department of Electrical and Computer Engineering and Institute for Systems and Robotics, Instituto Superior T\'ecnico, University of Lisbon, Portugal {\texttt{sergio.pequito@tecnico.ulisboa.pt}}.}}
	\maketitle
	\pagestyle{empty} 
	\thispagestyle{empty} 
	\begin{abstract}
		Distributed optimization finds many applications in machine learning, signal processing, and control systems. In these real-world applications, the constraints of communication networks, particularly limited bandwidth, necessitate implementing quantization techniques. 
   In this paper, we propose distributed optimization dynamics over multi-agent networks subject to logarithmically quantized data transmission. Under this condition, data exchange benefits from representing smaller values with more bits and larger values with fewer bits. As compared to uniform quantization, this allows for higher precision in representing near-optimal values and more accuracy of the distributed optimization algorithm.
  		The proposed optimization dynamics comprise a primary state variable converging to the optimizer and an auxiliary variable tracking the objective function's gradient.
		Our setting accommodates dynamic network topologies, resulting in a hybrid system requiring convergence analysis using matrix perturbation theory and eigenspectrum analysis. 
	\end{abstract}
	\begin{IEEEkeywords}
		Distributed optimization, quantization, support vector machine,  perturbation theory, consensus
	\end{IEEEkeywords}

 \vspace{-0.3cm}
	\section{Introduction} \label{sec_intro}
	\IEEEPARstart{D}{istributed} optimization and learning have gained significant interest due to advancements in cloud computing, parallel processing, and the Internet of Things  \cite{ren2010distributed}. In real-world applications, the constraints of communication networks, particularly limited bandwidth, necessitate implementing quantization techniques. 
 Unfortunately, most existing distributed optimization and learning algorithms do not consider real-world constraints like quantization and saturation. Examples include dual-based decomposition \cite{terelius2011decentralized}, one-bit diffusion-based learning \cite{zayyani2018bayesian}, optimization under noisy information sharing \cite{pan2024utilizing,wang2022gradient,reisizadeh2022distributed,zhang2018distributed}, accelerated convergence by using the history information to achieve fast and accurate estimation of the average gradient \cite{qu2017harnessing} or by using signum-based non-Lipschitz functions \cite{spl24}, and Lagrangian dual-based algorithms via edge-based methods \cite{shi2018augmented} -- see \cite{yang2019survey,nedic2018distributed} for surveys.

Logarithmic quantization balances communication efficiency and solution accuracy, preserving important information while minimizing communication overhead \cite{dimakis2010gossip,lee2017lognet,bu2017stability}. It often outperforms linear (uniform) quantization in implementation, replacing multiplier hardware with shifters or adders \cite{oh2021automated}. This approach is particularly beneficial in several scenarios: (\emph{i}) when data distribution across agents is non-uniform or skewed \cite{lee2017lognet}; (\emph{ii}) when variables have a wide dynamic range, common in distributed optimization problems; (\emph{iii}) when dealing with neural network training, where weights are distributed non-uniformly around zero and activation functions are highly concentrated near zero \cite{gholami2022survey}; (\emph{iv}) when encoding data with large dynamic ranges in fewer bits than fixed-point representation \cite{miyashita2016convolutional}; and (\emph{v}) when implemented in digital hardware, where log-domain representation is natural~\cite{jiang2024hardware}. 

Logarithmic quantization in distributed resource allocation and scheduling via equality-constraint optimization was addressed in \cite{ojsys,my_ecc}. Distributed resource allocation under quantized agents with saturation was considered in \cite{ccta}. Other related work includes distributed optimization with linear (uniform) quantization \cite{rabbat2005quantized,lee2018finite,lee2019deep,rikos2023distributed,rikos2021optimal}. 
Our research addresses the gap in the literature regarding logarithmic quantization over consensus-constraint distributed optimization networks. Unlike uniform quantization, this approach leads to no optimization residual and optimality gap. Motivating applications include networked empirical risk minimization \cite{xin2020decentralized}, distributed logistic regression \cite{qureshi2021decentralized}, distributed support vector machine (SVM) \cite{dsvm}, and classification via neural networks \cite{qureshi2020s}. Note that distributed optimization problems often involve variables with a wide dynamic range. Logarithmic quantization adapts well to such scenarios by allocating more bits to the informative lower end  (near the optimal point) of the range and fewer bits to the less informative higher end. This is an improvement over uniform quantization with fixed non-adaptive bit allocation.

The main contributions of our work are as follows:
\textbf{1.} We propose distributed dynamics for logarithmic quantization in data networks, introducing an auxiliary variable to track gradients and update node states;
\textbf{2.} We analyze the convergence of gradient tracking dynamics over \textit{dynamic} networks using a perturbation-based method to prove convergence for sufficiently small update rates;
\textbf{3.} We apply our approach to distributed SVM-based binary classification, demonstrating its practical utility in cooperative data classification over quantized dynamic networks;
\textbf{4.} Our logarithmically quantized dynamics achieve exact convergence, surpassing linear uniform quantization methods with optimality gaps.

	\section{Problem Formulation and Assumptions} \label{sec_prob}
	The distributed optimization problem is formulated as minimizing the sum of local objective functions over a \mbox{multi-agent} network with $n$ agents, i.e., 
	\begin{align}\nonumber
		\min_{\mb{x} \in \mathbb{R}^{nm}} &
		F(\mb x) = \sum_{i=1}^{n} f_i(\mb{x}_i)\\\label{eq_prob}
		\text{subject to}& ~ \mb{x}_1 = \mb{x}_2 = \dots = \mb{x}_n,
	\end{align}
	where $\mb{x}_i\in\mathbb{R}^m$ is the state of node/agent $i$ and column vector $\mb{x}:=(\mb{x}_1;\dots;\mb{x}_n)$ as the global state. The local objective function at agent/node $i$ is
	$f_i:\mathbb R^{m} \mapsto \mathbb R$ and $F:\mathbb R^{nm} \mapsto \mathbb R$ is the global objective. 
 Note that the coupling between the objective functions $f_i$ is moved from the decision variable to the consistency constraints. 
  In the present setting, we seek to solve problem~\eqref{eq_prob} in a distributed way. This implies that the learning and computations are local and every agent only uses the data in its neighborhood to solve the problem, i.e.,  to determine ${\mb{x}}^* \in \mathbb{R}^{nm}$ which is in the form~$\mb x^* := [\mb{x}^*_1;\mb{x}^*_2;\dots;\mb{x}^*_n] =  \mb{1}_n \otimes \overline{\mb{x}}^*$ and ${\sum_{i=1}^{n} \nabla f_i(\mb{x}^*_i) = \mb{0}_m}$, where $\overline{\mb{x}}^*$ is the optimal state at each local node, and $\mb{1}_l$ and $\mb{0}_l$  denote all ones and all zeros column vectors of size $l$, respectively.

Additionally, we require the following technical assumptions.
	 
\begin{ass} \label{ass_cost}
 The functions $f_i:\mathbb R^{m} \mapsto \mathbb R$ with smooth gradients for $i\in\{1,\ldots,n\}$ might be locally non-convex, but satisfy
 \begin{equation} \label{eq_sum_df}
 	(\mb{1}_n \otimes I_m)^\top H  ( \mb{1}_n \otimes I_m)= \sum_{i=1}^n \nabla^2  f_i(\mb{x}_i) \succ 0,
 \end{equation}
 with $H:=\mbox{diag}[\nabla^2 f_i(\mb{x}_i)]$ and there exists $\gamma>0$ for which $\nabla^2 f_i(\mb{x}_i) \preceq \gamma I_m$,   $I_m$ denotes the $m\times m$ identity matrix.\hfill $\circ$
\end{ass} 
This assumption ensures that the global objective function has no local minimum and our gradient-based dynamics converges to the global minimum of the overall objective.
In real-world distributed systems, the communication topology between agents is inherently dynamic, fluctuating due to factors such as node mobility, varying channel conditions, or intermittent connectivity. 
The time-varying network structure is mathematically represented by a directed graph (digraph) $G_\theta(\mc{V}_\theta, \mc{E}_\theta \subset \mc{V}_\theta \times \mc{V}_\theta)$, where $\mc{V}_\theta$ is the set of vertices (agents) and $\mc{E}_\theta$ is the set of edges at configuration $\theta$. An edge $(i,j) \in \mc{E}_\theta$ denotes that agent $i$ sends information to (i.e., communicates with) node $j$. The network's dynamic nature is captured by a switching signal $\theta \in \{1, \ldots, \mc{M}\}$, reflecting $\mc{M}$ possible network configurations.


 \begin{ass} \label{ass_net}
 	 The  adjacency matrix~${W_\theta=\{w_{ij}^\theta\}}$, with $w_{ij}^{\theta} \in\mathbb{R}$,  associated with $\mc{G}_\theta$ is weight-balanced (WB), i.e., $\sum_{i=1}^n w_{ij}^\theta = \sum_{j=1}^n w_{ij}^\theta$ for every $\theta$. The associated Laplacian matrix is~$\overline{W}_\theta=\{\overline{w}_{ij}^\theta\}$ with entries $\overline{w}_{ij}^\theta=w_{ij}^\theta$ for $i\neq j$ and $\overline{w}_{ij}^\theta=-\sum_{i=1}^n w_{ij}^\theta$ for $i=j$. Furthermore, we need the network $\mc{G}_\theta$ to be connected at all times and for all $\theta$. \hfill $\circ$
 \end{ass} 


An example consensus-based WB setup is given in \cite{nowzari2016distributed}. Additionally, as agents collaborate to solve problem \eqref{eq_prob}, the volume of data exchanged poses significant challenges in \mbox{real-world} applications, particularly regarding network bandwidth. 
In summary, we seek to address the following problem:

\underline{\textbf{Problem 1}} Determine a solution to the  distributed optimization problem \eqref{eq_prob},  under the following constraints:
\begin{enumerate}
\item  \emph{Dynamic network topology:} The problem is set in a time-varying network structure, represented by a directed graph $G_\theta(\mc{V}_\theta, \mc{E}_\theta \subset \mc{V}_\theta \times \mc{V}_\theta)$, where $\theta \in \{1, \ldots, \mc{M}\}$ is a switching signal reflecting $\mc{M}$ possible network configurations.

\item \emph{Local computation:} Each agent must solve the problem using only local information and data from its neighborhood without access to global information.

\item \emph{Communication constraints:} The solution must address real-world challenges in data exchange, particularly regarding network bandwidth limitations, without losing optimality.
\end{enumerate}

 To address the aforementioned problem, we propose a distributed optimization framework leveraging quantization, a crucial technique for reducing data precision. This will facilitate efficient communication and enhance algorithm execution in real-time scenarios; specifically, we will deploy logarithmic quantization to achieve robust performance across dynamically changing network configurations. 

\section{Logarithmically Quantized Distributed Optimization over Dynamic Multi-Agent Networks} \label{sec_dyn}

\subsection{The Proposed Distributed Log-Quantized Dynamics} \label{sec_dyn_sol}

The proposed solution to \textbf{Problem 1} involves two dynamics: one to derive the agents to reach agreement on the optimization variable and one to track the consensus on the local gradients. For the latter, we introduce an auxiliary variable $\mb{y}$ to track the consensus on gradients, which derives the first dynamics toward the optimum. Then,
each agent performs the following:

\begin{align} \label{eq_xdot_g}	
	\dot{\mb{x}}_i &= -\sum_{j=1}^{n} w_{ij}^\theta (q(\mb{x}_i)-q(\mb{x}_j))-\alpha \mb{y}_i, \\ \label{eq_ydot_g}
	\dot{\mb{y}}_i &= -\sum_{j=1}^{n} w_{ij}^\theta (q(\mb{y}_i)-q(\mb{y}_j) ) + \partial_t \boldsymbol{\nabla} f_i(\mb{x}_i),
\end{align}
where  $\partial_t$ denotes the time-derivative $\frac{d}{dt}$, $w_{ij}^{\theta} \in\mathbb{R}$ is the weight associated with the edge $(i,j)$  over the graph $\mc{G}_\theta$, ~$\mb{x}_i$ as the state of node~$i$, the auxiliary variable ${\mb{y}=[\mb{y}_1;\mb{y}_2;\dots;\mb{y}_n]\in\mbb{R}^{mn}}$, $q(\mb{x}_i),q(\mb{y}_i)\in\mbb{R}^{mn}$ as the log-quantized versions of $\mb{x}_i,\mb{y}_i$, and the real scalar~$\alpha>0$ denotes the step-size for gradient tracking. 
Additionally, the function $q(\cdot)$ denotes the \emph{log-scale quantization}, implying that the gradient information shared over the network is logarithmically quantized. This log-scale quantization is formulated as \cite{guo2013consensus},
\begin{align}
	q(z) &= \mbox{sgn}(z) \exp\Big(\rho \left[ \frac{\log(|z|)}{\rho}\right]\Big), \label{eq_quan_log}
\end{align}
where $[\cdot]$ denotes rounding to the nearest integer, $\mbox{sgn}(.)$ is the sign function, and the log quantization level $\rho>0$ denotes a scaling parameter for the desirable quantization. It is worth mentioning that log quantization $q(\cdot)$ is a strongly sign-preserving odd mapping and it is sector-bounded as $(1-\frac{\rho}{2})z\leq q(z) \leq (1+\frac{\rho}{2})z$.

\subsection{Convergence of  Distributed \mbox{Log-Quantized} Dynamics} \label{sec_dyn_conv}

From the consensus structure of the proposed dynamics and sector-bound condition of log-scale 
quantization, by initializing $\mb{y}(0)=\mb{0}_{nm}$ both summations in the right-hand-side of the equations~\eqref{eq_xdot_g}-\eqref{eq_ydot_g} asymptotically go to zero, leading to
\begin{align} \label{eq_sumxdot}
	\sum_{i=1}^n \dot{\mb{y}}_i  = \sum_{i=1}^n \partial_t \boldsymbol{ \nabla} f_i(\mb{x}_i), ~~~
	\sum_{i=1}^n \dot{\mb{x}}_i
	= -\alpha \sum_{i=1}^n\mb{y}_i.
\end{align}
Then, by simple integration it readily follows that $\sum_{i=1}^n \dot{\mb{x}}_i $ tracks $ - \sum_{i=1}^n \boldsymbol{ \nabla} f_i(\mb{x}_i)$, which shows the gradient-tracking nature of the proposed dynamics. Also, from the consensus structure of the dynamics and the optimization problem, the equilibrium is ${\mb{x}}^*= \mb{1}_n \otimes \overline{ \mb{x}}^* $. To find the equilibrium ${\mb{x}}^*$, we set  ${\dot {\mathbf x}_i = \mb{0}_m}$ and ${\dot {\mathbf y}_i = \mb{0}_m}$; then, we have~${(\mathbf 1_n^\top \otimes I_m) \boldsymbol{ \nabla} F(\mb{x}^*) = \mb{0}_m}$,  $\dot{\mb{y}}_i = \frac{d}{dt} \boldsymbol{ \nabla} f_i(\overline{ \mb{x}}^*)=  \boldsymbol{ \nabla}^2 f_i(\overline{ \mb{x}}^*) \dot{\mb{x}}_i = \mb{0}_m$ and, consequently, $\sum_{i=1}^n \dot{\mb{x}}_i = -\alpha (\mathbf 1_n^\top \otimes I_m) \boldsymbol{ \nabla} F(\mb{x}^*) =  \mb{0}_m$, which implies that the state~$[\mb{x}^*;\mb{0}_{nm}]$, with $\mb{x}^* = \mb{1}_n \otimes \overline{ \mb{x}}^*$, is invariant under the proposed dynamics.
Following the Laplacian presentation of the consensus-type dynamics, Eqs.~\eqref{eq_xdot_g}-\eqref{eq_ydot_g} can be written as 
$\dot{\mb{x}} = \overline{W}_{\theta} q(\mb{x})-\alpha \mb{y}$ and 
 $\dot{\mb{y}} = \overline{W}_{\theta} q(\mb{x}) + H\dot{\mb{x}}
$ with $H:=\mbox{diag}[\nabla^2 f_i(\mb{x}_i)]$. Define time-dependent diogonal matrix $Q(t) :=  \mbox{diag}[\frac{q(\mb{x}_i)}{\mb{x}_i}]$ as a linearization matrix  to relate the linear state $\mb{x}$ with the quantized state $q(\mb{x})$ at every time instant $t$, i.e., $q(\mb{x}(t))=Q(t)\mb{x}(t)$. Given this matrix one can write $\overline{W}_\theta q(\mb{x}(t)) = \overline{W}_\theta Q(t)\mb{x}(t) = \overline{W}_{\theta,q} \mb{x}(t)$ where $\overline{W}_{\theta,q} := \overline{W}_{\theta}Q$. Then, in a compact form, we can rewrite the dynamics \eqref{eq_xdot_g}-\eqref{eq_ydot_g} in the following form:
\begin{align} \label{eq_xydot1}
	\left(\begin{array}{c} \dot{\mb{x}} \\ \dot{\mb{y}} \end{array} \right) = M_q(t,\alpha,\theta) \left(\begin{array}{c} {\mb{x}} \\ {\mb{y}} \end{array} \right),
\end{align}
where $M_q(t,\alpha,\theta)$ denotes the linearized matrix at time $t$ as
\begin{align} \label{eq_M_g}
	M_q(t,\alpha,\theta) = \left(\begin{array}{cc} \overline{W}_{\theta,q} \otimes I_m & -\alpha I_{mn} \\ H(\overline{W}_{\theta,q}\otimes I_m) & \overline{W}_{\theta,q} \otimes I_m - \alpha H
	\end{array} \right),
\end{align}

\vspace{0.2cm}
Most importantly, we can re-write~\eqref{eq_M_g} as follows.

\begin{lem} \label{lem_M}
	The proposed dynamics \eqref{eq_xydot1}-\eqref{eq_M_g} can be written as the following perturbation-based form,
	\begin{align}  \label{eq_Mg}
		M_q(t,\alpha,\theta) &=   M_q^0 + \alpha M^1, 
	\end{align}
	where  
	\begin{align} \label{eq_Mg2}
		M_q^0 = Q(t) M^0,
	\end{align} 
		with  $M^0 =   \left(\begin{array}{cc} \overline{W}_\theta \otimes I_m & \mb{0}_{mn\times mn} \\ H(\overline{W}_\theta \otimes I_m) & \overline{W}_\theta \otimes I_m \end{array} \right)$, and $M^1 = \left(\begin{array}{cc} \mb{0}_{mn\times mn} & - {I_{mn}} \\ {\mb{0}_{mn\times mn}} & - H \end{array} \right)$, where the following linear matrix inequalities hold: 
	\begin{align} \label{eq_Qlmi}
		(1-\frac{\rho}{2}) I_{nm}  &\preceq Q(t) \preceq (1+\frac{\rho}{2}) I_{nm}, \\ \label{eq_Mlmi}
		(1-\frac{\rho}{2}) M^0  &\preceq M_q^0 \preceq (1+\frac{\rho}{2}) M^0.
	\end{align}
\end{lem}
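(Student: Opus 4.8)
The plan is to handle the lemma as two essentially independent parts: first the algebraic splitting $M_q(t,\alpha,\theta)=M_q^0+\alpha M^1$ with $M_q^0=Q(t)M^0$, which is just a regrouping of the block matrix in~\eqref{eq_M_g}; and then the inequalities~\eqref{eq_Qlmi}--\eqref{eq_Mlmi}, which are consequences of the sector bound on $q(\cdot)$. I would carry them out in that order.

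For the splitting, I would start from~\eqref{eq_M_g} and note that $\alpha$ appears only in the two right-hand blocks, as $-\alpha I_{mn}$ in the $(1,2)$ position and as the summand $-\alpha H$ in the $(2,2)$ position. Pulling those two terms out collects them exactly into $\alpha M^1$, with $M^1$ as stated, and leaves the $\alpha$-independent block matrix with blocks $\overline W_{\theta,q}\otimes I_m$, $\mb 0$, $H(\overline W_{\theta,q}\otimes I_m)$, and $\overline W_{\theta,q}\otimes I_m$. I would then substitute the definition $\overline W_{\theta,q}=\overline W_\theta Q$: since $Q(t)=\mbox{diag}[q(\mb x_i)/\mb x_i]$ is (block-)diagonal, $\overline W_{\theta,q}\otimes I_m$ factors as $(\overline W_\theta\otimes I_m)Q(t)$, and pulling the common diagonal factor $Q(t)$ out of the four blocks --- understood in the stacked $2mn$-dimensional sense, i.e. the block-diagonal replication of $Q(t)$ --- identifies the $\alpha$-independent part with $Q(t)M^0$ for $M^0$ as in the statement. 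Up to whether $Q(t)$ is written to the left or the right of $M^0$ (the two are similar via conjugation by $Q(t)$, which is all that matters for the later eigen-perturbation argument), this gives~\eqref{eq_Mg}--\eqref{eq_Mg2}.

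For the inequalities, the key scalar observation is that because $q$ is an odd, strongly sign-preserving map with $(1-\tfrac\rho2)z\le q(z)\le(1+\tfrac\rho2)z$ for $z\ge0$, the ratio $q(z)/z$ lies in $[1-\tfrac\rho2,\,1+\tfrac\rho2]$ for every $z\ne0$ (for $z<0$ this follows from the $z\ge0$ case by oddness; at $z=0$ one takes the limiting value $1$), and is strictly positive once $\rho<2$. Hence every diagonal entry of $Q(t)$ lies in $[1-\tfrac\rho2,1+\tfrac\rho2]$, and since $Q(t)$ is diagonal (hence symmetric) this is exactly~\eqref{eq_Qlmi}; it also gives $Q(t)\succ0$, which is what legitimizes treating $Q(t)$ as an invertible linearization matrix with $q(\mb x(t))=Q(t)\mb x(t)$. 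Then~\eqref{eq_Mlmi} follows by combining~\eqref{eq_Qlmi} with $M_q^0=Q(t)M^0$: both $M_q^0-(1-\tfrac\rho2)M^0=\big(Q(t)-(1-\tfrac\rho2)I_{2mn}\big)M^0$ and $(1+\tfrac\rho2)M^0-M_q^0=\big((1+\tfrac\rho2)I_{2mn}-Q(t)\big)M^0$ are a positive-semidefinite diagonal matrix times $M^0$, i.e. a row-wise rescaling of $M^0$ by a factor in $[0,\rho]$, which is the intended reading of~\eqref{eq_Mlmi} (note $M^0$ itself need not be symmetric, since the weight-balanced digraph Laplacian $\overline W_\theta$ of Assumption~\ref{ass_net} need not be).

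I do not expect a genuine obstacle here --- the content is routine substitution plus the scalar sector estimate. The points that actually need care, and where I would be most careful, are: consistently extending $Q(t)$ from the $mn$-dimensional state to the $2mn$-dimensional stacked space; the left/right placement of $Q(t)$ relative to $M^0$ in~\eqref{eq_Mg2}; and the precise sense of the bound~\eqref{eq_Mlmi} given that $M^0$ is not a symmetric matrix. Making these bookkeeping conventions explicit, together with the standing requirement $\rho<2$ ensuring $Q(t)\succ0$, is what turns the informal identity into a statement usable in the subsequent matrix-perturbation and eigenspectrum analysis.
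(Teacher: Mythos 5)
Your proposal is correct and follows essentially the same route as the paper's own proof: collect the $\alpha$-dependent blocks into $\alpha M^1$, use the sector bound $(1-\tfrac{\rho}{2})\leq q(z)/z\leq(1+\tfrac{\rho}{2})$ to get~\eqref{eq_Qlmi}, write $q(\mb{x})=Q(t)\mb{x}$ so that $\overline{W}_{\theta,q}=\overline{W}_\theta Q(t)$ yields~\eqref{eq_Mg2}, and combine to get~\eqref{eq_Mlmi}. Your extra bookkeeping --- the block-diagonal extension of $Q(t)$ to the stacked $2mn$-dimensional space, the left/right placement of $Q(t)$ (immaterial for the spectrum since $Q(t)M^0$ and $M^0Q(t)$ are similar when $Q(t)\succ0$), and the non-Loewner reading of~\eqref{eq_Mlmi} since $M^0$ is not symmetric --- is a more careful rendering of steps the paper states with some abuse of notation, not a different argument.
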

\begin{proof}
	Given~\eqref{eq_M_g}, collect the terms without $\alpha$ in $M_q^0$ and the rest of the terms in perturbation matrix $\alpha M^1$. From the definition of the log-scale quantization, we have 
	 $(1-\frac{\rho}{2}) \leq \frac{q(\mb{x}_i)}{\mb{x}_i} \leq (1+\frac{\rho}{2})$ and~\eqref{eq_Qlmi} readily follows. Let us, with some abuse of notation, denote $q(\mb{x}(t)) = Q(t) \mb{x}(t)$, and from the consensus algorithms, $-\sum_{j=1}^{n} w_{ij}^\theta (\mb{x}_i-\mb{x}_j)$ can be written as $(\overline{W}_{\theta} \otimes I_m)\mb{x}$. Then, following the \textit{\mbox{sector-bound} relation} of $q(\cdot)$, we have $-\sum_{j=1}^{n} w_{ij}^\theta (q(\mb{x}_i)-q(\mb{x}_j))=(\overline{W}_{\theta,q} \otimes I_m)\mb{x}$, where similar equations hold for $\mb{y}$. Then, $\overline{W}_{\theta,q} =  \overline{W}_{\theta} Q(t)$ and~\eqref{eq_Mg2} readily follows. Lastly, from the latter with~\eqref{eq_Qlmi}, we obtain~\eqref{eq_Mlmi}. 
\end{proof}

We now demonstrate that the dynamics presented in equations \eqref{eq_xdot_g}-\eqref{eq_ydot_g} provide a solution to \textbf{Problem 1}. Our approach involves proving that equations \eqref{eq_xydot1}-\eqref{eq_M_g} converge to the desired state. It is important to note that while the convergence analysis is conducted from a centralized perspective, the actual implementation remains fully distributed. The main theorem of this paper is presented as follows:

\begin{theorem} \label{thm_zeroeig}
	Under Assumptions~\ref{ass_cost}-\ref{ass_net}, $0<\rho<2$, and for sufficiently small $\alpha$, the proposed dynamics~\eqref{eq_xydot1}-\eqref{eq_M_g} converges to  state~$[\mb{x}^*;\mb{0}_{nm}]$, where  $\mb x^* := [\mb{x}^*_1;\mb{x}^*_2;\dots;\mb{x}^*_n] =  \mb{1}_n \otimes \overline{\mb{x}}^*$ is the solution to problem~\eqref{eq_prob}.
\end{theorem}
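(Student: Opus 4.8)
By Lemma~\ref{lem_M} the coefficient matrix of~\eqref{eq_xydot1} is $M_q(t,\alpha,\theta)=M_q^0+\alpha M^1$ with $M_q^0=Q(t)M^0$, and $0<\rho<2$ makes $Q(t)\succ 0$ through~\eqref{eq_Qlmi}. The plan is to treat the gradient-tracking coupling $\alpha M^1$ as a small perturbation of the consensus part $M_q^0$ and to control, uniformly in $t$ and in the switching signal $\theta$, the spectrum of $M_q(t,\alpha,\theta)$ for small $\alpha$ using matrix-perturbation theory. The steps would be: (a) show that $M_q^0$ has a semisimple (non-defective) zero eigenvalue of algebraic multiplicity $2m$ and all other eigenvalues in the open left half-plane (LHP); (b) show by a first-order perturbation argument that, for small $\alpha$, the $2m(n-1)$ nonzero eigenvalues remain in the open LHP while the $2m$-fold zero eigenvalue splits into $m$ that stay at $0$ and $m$ that move into the open LHP; (c) turn this uniform spectral picture into convergence of the hybrid trajectory to the $m$-dimensional zero-eigenspace, and identify the limit with $[\mb x^*;\mb 0_{nm}]$ through the gradient-tracking conservation law.

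\textbf{Step (a).} $M^0$ is block lower-triangular with both diagonal blocks equal to $\overline{W}_\theta\otimes I_m$, and $M_q^0=Q(t)M^0$ remains block lower-triangular since $Q(t)$ is (block-)diagonal. Under Assumption~\ref{ass_net}, $\mc G_\theta$ is connected and weight-balanced, so $\overline{W}_\theta$ has $\mb 1_n$ as both a left and a right null vector, a simple zero eigenvalue, and all remaining eigenvalues in the open LHP (standard for Laplacians of connected weight-balanced digraphs; e.g.\ the symmetric part $\tfrac{1}{2}(\overline{W}_\theta+\overline{W}_\theta^\top)$ is negative semidefinite with kernel $\mb 1_n$). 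Hence $M^0$ has a zero eigenvalue of algebraic multiplicity $2m$, and it is semisimple because the zero eigenvalue of $\overline{W}_\theta$ is semisimple and the off-diagonal block $H(\overline{W}_\theta\otimes I_m)$ annihilates $\ker(\overline{W}_\theta\otimes I_m)$. Since $Q(t)$ is nonsingular, $\ker M_q^0=\ker M^0$; and since $Q(t)\succ 0$, a $Q(t)^{1/2}$-congruence on each Laplacian block, together with the sector bounds~\eqref{eq_Qlmi}--\eqref{eq_Mlmi}, shows that the nonzero spectrum of $M_q^0$ also lies in the open LHP. The bound $\nabla^2 f_i\preceq\gamma I_m$ of Assumption~\ref{ass_cost} keeps these estimates uniform in $t$, and finiteness of $\{1,\dots,\mc M\}$ makes them uniform in $\theta$.

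\textbf{Step (b).} Because the zero eigenvalue of $M_q^0$ is semisimple and separated from the rest of the spectrum, the eigenvalues of $M_q^0+\alpha M^1$ vary continuously with $\alpha$, and the $2m$ near-zero ones equal $\alpha\mu_j+O(\alpha^2)$, where $\mu_1,\dots,\mu_{2m}$ are the eigenvalues of the reduced matrix $\Gamma=(\widetilde{V}^{\top}V)^{-1}\widetilde{V}^{\top}M^1 V$, with the columns of $V$ (resp.\ $\widetilde{V}$) forming a basis of the right (resp.\ left) zero-eigenspace of $M_q^0$. A short computation---keeping track of the $H$-twist in the left null space of $M_q^0$---would give $\Gamma$ block upper-triangular, with an $m\times m$ zero block and an $m\times m$ block whose eigenvalues are $-c\,\lambda_k\!\big(\sum_{i=1}^n\nabla^2 f_i(\overline{\mb x}^*)\big)$, $k=1,\dots,m$, for some positive constant $c$. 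Equivalently, a two-time-scale reduction of~\eqref{eq_xdot_g}--\eqref{eq_ydot_g} shows that, once consensus is reached, the $\mb 1_n^{\top}$-averages $\overline{\mb x}:=\tfrac{1}{n}\sum_i\mb x_i$, $\overline{\mb y}:=\tfrac{1}{n}\sum_i\mb y_i$ obey $\dot{\overline{\mb x}}=-\alpha\overline{\mb y}$ and $\dot{\overline{\mb y}}=-\tfrac{\alpha}{n}\big(\sum_{i=1}^n\nabla^2 f_i(\overline{\mb x}^*)\big)\overline{\mb y}$, the quantizer gains having canceled under $\mb 1_n^{\top}$ by weight-balance. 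Since $\sum_{i=1}^n\nabla^2 f_i(\overline{\mb x}^*)\succ 0$ by Assumption~\ref{ass_cost}, $\Gamma$ contributes $m$ zero eigenvalues and $m$ eigenvalues with strictly negative real part; thus for all sufficiently small $\alpha>0$, $M_q(t,\alpha,\theta)$ has a semisimple zero eigenvalue of multiplicity exactly $m$---with eigenspace the consensus direction $\{(\mb 1_n\otimes a;\mb 0_{nm})\}$, common to all modes---and its remaining $2mn-m$ eigenvalues in the open LHP, uniformly in $t$ and $\theta$.

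\textbf{Step (c) and the main obstacle.} The uniform spectral picture of Step~(b)---a common center subspace and uniform exponential contraction transversal to it across all modes and all admissible $Q(t),H$---would yield a common Lyapunov function on the complement of the consensus direction (equivalently, invoke the standard convergence result for semistable switched/slowly-varying systems), giving convergence of $(\mb x(t);\mb y(t))$ to that consensus direction in spite of the topology switching and the time variation of $Q(t)$. Finally, summing~\eqref{eq_ydot_g} over $i$ and using weight-balance gives the conservation law $\tfrac{d}{dt}\big(\sum_i\mb y_i-\sum_i\boldsymbol{\nabla} f_i(\mb x_i)\big)=\mb 0_m$; combined with $\mb y(0)=\mb 0_{nm}$, with $\mb y(t)$ reaching consensus, and with $\sum_i\mb y_i\to\mb 0_m$ (since $\sum_i\dot{\mb x}_i=-\alpha\sum_i\mb y_i\to\mb 0_m$ by~\eqref{eq_sumxdot}), this forces the limit to be $\mb y^*=\mb 0_{nm}$ and $\mb x^*=\mb 1_n\otimes\overline{\mb x}^*$ with $\sum_{i=1}^n\boldsymbol{\nabla} f_i(\mb x_i^*)=\mb 0_m$, i.e.\ the solution of~\eqref{eq_prob}. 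I expect the delicate part to be Step~(b): carrying out the reduction onto the zero-eigenspace correctly---in particular handling the $H$-twist in the left null space of $M_q^0$ and justifying that the slow dynamics is genuinely governed by the (positive, by Assumption~\ref{ass_cost}) aggregate Hessian even though the fast/consensus subsystem is time-varying through $Q(t)$---and then making the admissible range of $\alpha$ uniform over the $\mc M$ configurations and over the compact set of admissible $Q(t)$, so that the frozen-mode spectral bounds truly imply convergence of the genuine hybrid system.
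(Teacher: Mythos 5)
Your proposal follows essentially the same route as the paper's proof: decompose $M_q=M_q^0+\alpha M^1$ via Lemma~\ref{lem_M}, use the block-triangular structure and the weight-balanced connected Laplacian to get a $2m$-fold semisimple zero eigenvalue of $M_q^0$ with the rest in the open left half-plane, apply first-order eigenvalue perturbation so that $m$ zeros persist while $m$ move left at a rate governed by $-\sum_{i=1}^n\nabla^2 f_i\prec 0$ (Assumption~\ref{ass_cost}), keep the remaining eigenvalues in the left half-plane for small $\alpha$, and conclude with a Lyapunov-type argument identifying the limit with $[\mb{x}^*;\mb{0}_{nm}]$. The differences are minor refinements rather than a different method: you use the exact ($H$-twisted) left null space in the reduced matrix where the paper uses $V^\top$, you identify the limit through the gradient-tracking conservation law rather than the invariance of $[\mb{x}^*;\mb{0}_{nm}]$, and you settle for ``sufficiently small $\alpha$'' by continuity instead of the paper's explicit Step~II bound $\overline{\alpha}=\min_{1\leq j\leq m}|\operatorname{Re}\{\lambda_{2,j}\}|/\gamma$.
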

\begin{proof}
	
	\textbf{Step I:}
	From an eigenspectrum analysis point of view, one can find a relation between the spectrum $\sigma(\overline{W}_{\theta,q})$ and spectrum $\sigma(\overline{W}_{\theta})$. Recall that for $0<\rho<2$ from \eqref{eq_Qlmi} we have $Q(t)$ is a positive diagonal matrix, so we can write $\mbox{det}(\overline{W}_{\theta,q}-\lambda I_{2mn}) = \mbox{det}(\overline{W}_{\theta}-\lambda Q(t)^{-1} I_{2mn})=0$. In other words, given the eigenvalue $\lambda_i \in \sigma(\overline{W}_{\theta})$, we have $\lambda_i \xi_i \in \sigma(\overline{W}_{\theta,q})$ with $\xi_i := \frac{q(\mb{x}_i)}{\mb{x}_i}$. 
	Then, following the lower-block-triangular structure of $M^0$ and $M^0_q$, for the dynamics~\eqref{eq_xdot_g}-\eqref{eq_ydot_g} we have 
	\begin{align} \label{eq_spect_k}
		(1-\frac{\rho}{2}) \sigma(M^0) \leq \sigma(M^0_q) \leq (1+\frac{\rho}{2}) \sigma(M^0).
	\end{align}
This follows from Lemma~\ref{lem_M} and the fact that the matrix $M^0$ is block triangular and, therefore, $\sigma(M^0) = \sigma(\overline{W} \otimes I_m) \cup \sigma(\overline{W} \otimes I_m)$. From the definition of the Laplacian matrix  and its properties~\cite{SensNets:Olfati04}, $M^0$ has~$m$ set of eigenvalues as $\operatorname{Re}\{\lambda_{2n,j}\} \leq \ldots \leq \operatorname{Re}\{\lambda_{3,j}\} < \lambda_{2,j} = \lambda_{1,j} = 0,$
where $j=\{1,\ldots,m\}$. 
Then, from perturbation theory~\cite{stewart_book} and, in particular~\cite[Lemma~1]{cai2012average}, one can relate the eigenvalues of $\sigma(M^0)$, $\sigma(M_q^0)$, and $\sigma(M_q)$.
Considering the perturbation~$\alpha M^1$, we can check how the eigenvalues of \eqref{eq_spect_k} vary. 
In fact, we check how ~$\lambda_{1,j}$ and~$\lambda_{2,j}$ (the zero eigenvalues) move in the complex plane -- the zero eigenvalues move to the left-half plane the system is stable and unstable if they move toward the right-half plane.  Let~$\lambda_{1,j}(\alpha,t)$ and~$\lambda_{2,j}(\alpha,t)$ denote  the perturbed eigenvalues by~$\alpha M^1$ and follows \eqref{eq_spect_k}. Invoking Assumption~\ref{ass_net}, the following properties hold for  Laplacian $\overline{W}_\theta$ of a strongly connected weight-balanced graph~\cite{SensNets:Olfati04}: \emph{(i)} $\overline{W}_\theta$ has only one zero eigenvalue and the other eigenvalues on the left-half plane, and \emph{(i)} we have~$\mb{1}_n^\top \overline{W}_\theta= \mb{0}_n$ and~$\overline{W}_\theta \mb{1}_n=\mb{0}_n$, i.e., $\mb{1}_n^\top$ and~$\mb{1}_n$ are respectively the left and right eigenvector of $\overline{W}_\theta$ for the zero eigenvalue. Hence,  the right eigenvectors of~$\lambda_{1,j}$ and~$\lambda_{2,j}$ are as follows:
	\begin{align} \nonumber
		V = [V_1~V_2] =\left(\begin{array}{cc}
			\mb{1}_n& \mb{0}_n \\
			\mb{0}_n & \mb{1}_n
		\end{array} \right)\otimes I_m,
	\end{align}
with the left eigenvectors as~$V^\top$.

From \eqref{eq_M_g}, we have~$\frac{dM_q(\alpha)}{d\alpha}|_{\alpha=0}=M^1$ and, subsequently,
	
	\small
	\begin{eqnarray} \label{eq_dmalpha}
		V^\top M^1 V= \left(\begin{array}{cc}
			\mb{0}_{m\times m}	& \mb{0}_{m\times m} \\
			...	& -(\mb{1}_n \otimes I_m)^\top H (\mb{1}_n \otimes I_m)
		\end{array} \right),
	\end{eqnarray} \normalsize
which has $m$ zero eigenvalues.
To determine the other  $m$ eigenvalues, we recall that, by invoking Assumption~\ref{ass_cost},  $f_i(\cdot)$ satisfies
	\begin{equation} 
		-(\mb{1}_n \otimes I_m)^\top H  ( \mb{1}_n \otimes I_m)= -\sum_{i=1}^n \nabla^2  f_i(\mb{x}_i) \prec 0.
	\end{equation}
Therefore, following from the perturbation-based analysis in \cite[Lemma~1]{cai2012average}, ${\frac{d\lambda_{1,j}}{d\alpha}|_{\alpha=0} = 0}$ and~${\frac{d\lambda_{2,j}}{d\alpha}}|_{\alpha=0}<0$. This implies that perturbing the matrix $M^0_q$ by~$\alpha M^1$, its~$m$ zero eigenvalues~$\lambda_{2,j}(\alpha,t)$ for all $j \in \{1,\dots,m\}$ change toward stability (i.e. left-half plane) while  $\lambda_{1,j}(\alpha,t)$ remain zero.

\textbf{Step II:}
Next, we need to ensure that due to the perturbation $\alpha M^1$ no other eigenvalue of $M_q^0$ moves to the right-half plane. For this purpose, we first recall some eigenspectrum analysis
from \cite[Appendix]{delay_est}. For notation simplicity, we drop the dependence on $\theta, t$. We find $\sigma(M_q)$ via row/column permutations in \cite[Eq.~(18)]{delay_est} as follows:
 \begin{align} \nonumber
	\mbox{det}(&\alpha  I_{mn}) \mbox{det}\Big(H(\overline{W}\otimes I_m) +\\ &\qquad \qquad  \frac{1}{\alpha}(\overline{W} \otimes I_m - \alpha H -\lambda I_{mn})^2\Big) = 0.
\end{align}
Without loss of generality, to simplify the calculations, we set $m=1$, and we have
\begin{align} \label{eq_m=1}
	\mbox{det}(I_{n}) \mbox{det}((\overline{W}  - \lambda I_{n})^2 +\alpha \lambda H ) = 0.
\end{align}

To preserve stability, the admissible $\alpha$ range needs to be defined such that the eigenvalues are in the left-half plane, except for one zero eigenvalue. Furthermore, note that the eigenvalues of a system matrix are continuous functions of its elements~\cite{stewart_book}. Additionally,
notice that~\eqref{eq_m=1} has two roots; one root is $\alpha=0$;  let the other positive root be denoted by $\overline{\alpha}>0$. Then, the range of $\alpha \in (0,~\overline{\alpha})$ is admissible for stability of $M_q(\alpha)$. We need to find $\overline{\alpha}$ as the positive root of~\eqref{eq_m=1}.
Note that for $\alpha = 0$, we have $\mbox{det}((\overline{W}  - \lambda I_{n})^2) = 0$, and the eigenspectrum is equal to two sets of eigenvalues in~$\sigma(\overline{W})$.



The diagonal structure of $\alpha H$ allows us to rewrite~\eqref{eq_m=1} (with some abuse of notation) in the following form:
\begin{align} \nonumber
	\mbox{det}\left((\overline{W}  -\lambda I_{n} + \sqrt{\alpha |\lambda| H})(\overline{W}  - \lambda I_{n} - \sqrt{\alpha |\lambda| H} )\right)  = 0,
\end{align}
with $\lambda$ as the eigenvalue of $\overline{W}$, which can be rewritten as
\begin{align} \nonumber
	\mbox{det}(\overline{W}  &- \lambda I_{n} \pm \sqrt{\alpha |\lambda| H} )= \\ \label{eq_det_all} &\qquad 
 \qquad \mbox{det}\left(\overline{W}  -\lambda \left(1   \pm \sqrt{\frac{\alpha  H}{|\lambda|}} \right)I_{n}\right) = 0.
\end{align}
Therefore, the perturbed eigenvalue of $\overline{W}$ from~\eqref{eq_det_all} is $\lambda (1 \pm \sqrt{\frac{\alpha  H}{|\lambda|}})$. Then, we find $\overline{\alpha}$ as the min value of $\alpha$ to make this perturbation zero;
\begin{align}
	\overline{\alpha} = \argmin_{\alpha} \left|1 - \sqrt{\frac{\alpha  H}{|\lambda|}}\right|.
\end{align}
Recall from Assumption~\ref{ass_cost} that $H \preceq \gamma I_{n}$, so we have
\begin{align}
\argmin_{\alpha} |1 - \sqrt{\frac{\alpha  H}{|\lambda|}}|
\geq \frac{\min \{|\operatorname{Re}\{\lambda_j\}|\neq 0\}}{\max \{H_{ii}\}} = \frac{|\operatorname{Re}\{\lambda_2\}|}{\gamma}.
\end{align}
Then, the stability range of $\alpha$ is
\begin{align} \label{eq_alphabar0}
	0 < \alpha < \overline{\alpha}:= \frac{\min_{1\leq j\leq m}|\operatorname{Re}\{{\lambda}_{2,j}\}|}{\gamma}.
\end{align}
For these values of $\alpha$ the system matrix $M_q(\alpha)$ has only one set of $m$ zero eigenvalue (associated with the dimension $m$ of state vector $\mb{x}_i$) and other eigenvalues in the left-half plane.

\textbf{Step III:}
Define the residual on the state and auxiliary variable as
$\delta  = \left(\begin{array}{c} {\mb{x}} \\ {\mb{y}} \end{array} \right) - \left(\begin{array}{c} {\mb{x}}^* \\ \mb{0}_{nm} \end{array} \right) \in \mathbb{R}^{2nm}.$
Recall that $[\mb{x}^*;\mb{0}_{nm}]$ is invariant under the proposed dynamics~\eqref{eq_xydot1}-\eqref{eq_M_g}. Therefore,
\begin{align} \label{eq_ddelta}
	\dot{\delta} &=  M_q \left(\begin{array}{c} \mb{x} \\ \mb{y} \end{array} \right) - M_q\left(\begin{array}{c} \mb{x}^* \\ \mb{0}_{nm} \end{array} \right) = M_q \delta.
\end{align}
Consider the following positive definite function  ${U(\delta) = \frac{1}{2} \delta^\top \delta}$ as a Lyapunov function candidate. From~\eqref{eq_ddelta}, $\dot{U} = {\delta}^\top \dot{\delta}=  \delta^\top M_q {\delta}$. Following Step I and Step~II, $\lambda_{2,j}$ as the largest nonzero eigenvalue of $M_q$ is in the left-half plane. Then, from~\cite[Sections~VIII-IX]{SensNets:Olfati04}, we have
\begin{eqnarray} \label{eq_Re2}
	\dot{U} \leq \max_{1\leq j\leq m}\operatorname{Re}\{{\lambda}_{2,j}\} \delta^\top  \delta,
\end{eqnarray}
which is negative-definite for $\delta \neq \mb{0}_{2nm}$. Hence, it is a Lyapunov function, and by invoking Lyapunov's theorem, we have that $\delta \rightarrow \mb{0}_{2nm}$.
\end{proof}
Theorem~\ref{thm_zeroeig} implies that the underlying system is asymptotically globally stable for $0<\alpha<\overline{\alpha}$ with the equilibrium $[\mb{1}_n \otimes \overline{ \mb{x}}^*;\mb{0}_{nm}]$ being invariant under the proposed dynamics.
The above proves that the solution converges for sufficiently small $\alpha$ given by~\eqref{eq_alphabar0}. However, the exact convergence rate cannot be obtained due to the nonlinear/nonsmooth nature of the proposed dynamics. For the linearized case given by Eqs. \eqref{eq_xydot1}-\eqref{eq_M_g}, following the decay rate of the dynamical systems and Eq.~\eqref{eq_Re2}, the rate of convergence is (at least) $\mc{O}(\exp(\max_{1\leq j\leq m}\operatorname{Re}\{{\lambda}_{2,j}\}t))$. 
Moreover, it can be seen that the convergence is irrespective of the switching signal topology $\theta$ for any $\alpha$ satisfying Eq.~\eqref{eq_alphabar0}. In other words, to ensure convergence, we consider the minimum of $|\operatorname{Re}\{\lambda_{2,j}\}|$ in~\eqref{eq_alphabar0} for the set of all network topologies $\mc{G}_\theta$.
Specifically, this follows as a consequence of the fact that $M_q(t,\alpha,\theta)$ has only one set of $m$ zero eigenvalues with the rest of eigenvalues in the left-half plane irrespective of the switching signal $\theta$ for any $\alpha \in (0~\overline{\alpha})$ -- see details in the proof of Theorem~\ref{thm_zeroeig}. 

\section{Simulation: Application to SVM} \label{sec_sim}

Consider data points~${\boldsymbol{\chi}_i \in \mathbb{R}^{m-1}}$, ${i=1,\ldots,N}$ labeled by~${l_i \in \{-1,1\}}$ (binary classification). These points are distributed among a network of nodes/agents and every agent has its local objective function. The agents share information over resource-constrained communication networks (e.g. wireless networks where each agent has constraints on the bandwidth). Therefore, the communication channels are (logarithmically) quantized. Over this network, the goal is to find the global optimal max-margin SVM classifier (hyperplane) ${\boldsymbol{\omega}^\top \boldsymbol{\chi} - \nu =0}$  to classify the data points into two groups in $\mathbb{R}^{m-1}$. The parameters associated to this hyperplane are $\boldsymbol{\omega}~$ and~$\nu$ that are a solution to the following optimization problem~\cite{chapelle2007training}:
\begin{align} \label{eq_svm_cent}
	\begin{aligned}
		\displaystyle
		& \min_{\boldsymbol{\omega},\nu}
		~ &  \boldsymbol{\omega}^\top \boldsymbol{\omega} + C \sum_{j=1}^{N} \max\{1-l_j( \boldsymbol{\omega}^\top \boldsymbol{\chi}_j-\nu),0\}^p,
	\end{aligned}
\end{align}
with~${p \in \mathbb N}$ and $C>0$ as the margin size parameter. Replace non-differentiable $\max\{z,0\}^1$ with its twice differentiable equivalent~${L(z,\mu)=\frac{1}{\mu}\log (1+\exp(\mu z))}$ to get the hinge loss function \cite{slp_book}.
One can make the two arbitrarily close by selecting $\mu$ large enough~\cite{slp_book} (for simulation we set $\mu=2$); thus, the distributed setup is formulated as follows:
\begin{align} \label{eq_svm_dist}
	\begin{aligned}
		\displaystyle
		\min_{\boldsymbol{\omega}_1,\nu_1,\ldots,\boldsymbol{\omega}_n,\nu_n}
		\quad &  \sum_{i=1}^{n} \boldsymbol{\omega}_i^\top \boldsymbol{\omega}_i + C \sum_{j=1}^{N_i} \tfrac{1}{\mu}\log (1+\exp(\mu z_{i,j})) \\
		\text{subject to} \quad&  \boldsymbol{\omega}_1 = \dots = \boldsymbol{\omega}_n,\qquad\nu_1 = \dots =\nu_n,
	\end{aligned}
\end{align}
where $N_i$ is the number of data points at node $i$, ${z_{i,j}:=1-l_j( \boldsymbol{\omega}_i^\top \boldsymbol{\chi}^i_j-\nu_i)}$ with the set of data points $\boldsymbol{\chi}^i_j$ assigned to agent $i$. These data are distributed among~$n$ agents where they reach consensus on the local hyperplane classifier parameters $\boldsymbol{\omega}_i$ and~$\nu_i$.

We consider a WB network of $n=20$ agents to classify the data in 2D space, which implies that the hyper-surface (line) to classify the data can be represented with $m=3$ variables, i.e., $\boldsymbol{\omega}_i \in \mathbb R^2 $ and~$\nu_i  \in \mathbb R$. The communication network of agents $\mc{G}_\theta$ is considered an \mbox{Erd\H{o}s-Rényi} network with linking probability $30\%$. This network topology $\mc{G}_\theta$ changes every $t=0.1$ sec. We consider $m=100$ data points to be classified via optimization \eqref{eq_svm_dist}. Every agent has access to $75\%$ of randomly chosen data points (with possible overlaps) and updates its states via the local dynamics \eqref{eq_xdot_g}-\eqref{eq_ydot_g}. The parameters for the simulation are as follows: $\alpha = 0.1$, $\mu = 2$, $C=40$, and $\rho = 0.25$.
The time-evolution of the classifier parameters at agents is shown in Fig.~\ref{fig_param}, where the consensus is achieved among all the agents. The hyperplane parameters reach a consensus as $\boldsymbol{\omega}_i = [1,-1] $ and~$\nu_i =0$. 
\begin{figure} [bpt]
	\centering
	\includegraphics[width=1.71in]{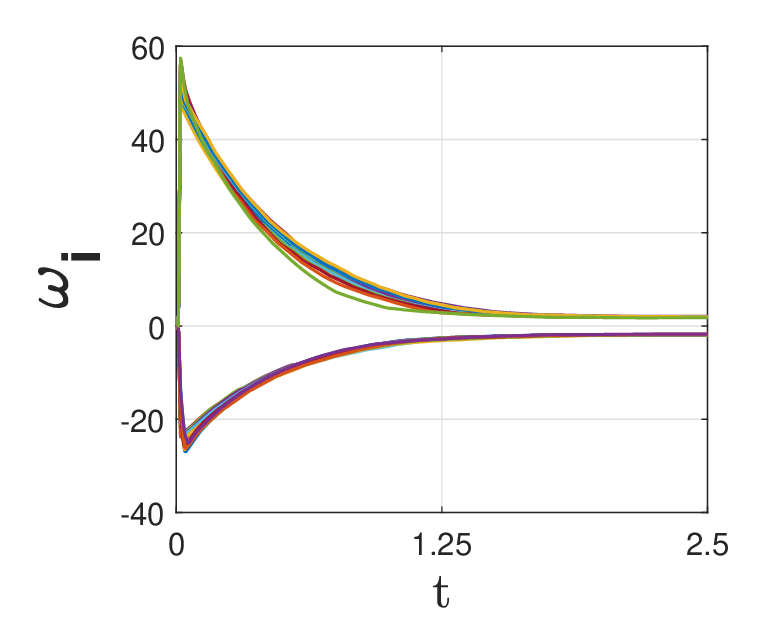}
	\includegraphics[width=1.71in]{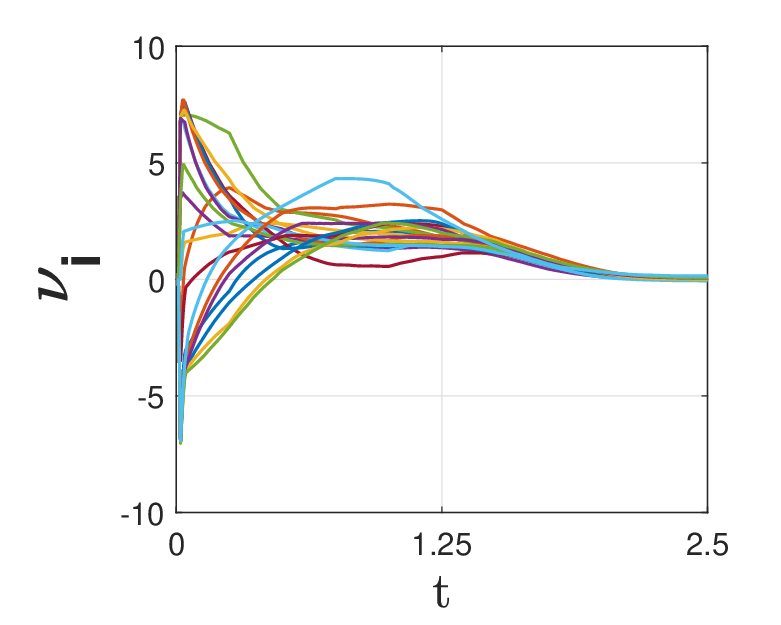}
	\caption{The time-evolution of the classifying line parameters $\boldsymbol{\omega}_i \in \mathbb R^2 $ and~$\nu_i  \in \mathbb R$ are shown in this figure. As it is clear from the figure, all agents have reached a consensus over these parameters. The minor oscillations in $\nu_i$ states are due to changes in the network topology.} \label{fig_param}
\end{figure}
The SVM lines by different agents separating the data points are shown at two different time instants in Fig.~\ref{fig_line}. 

\begin{figure} [bpt]
	\centering
	\includegraphics[width=1.5in]{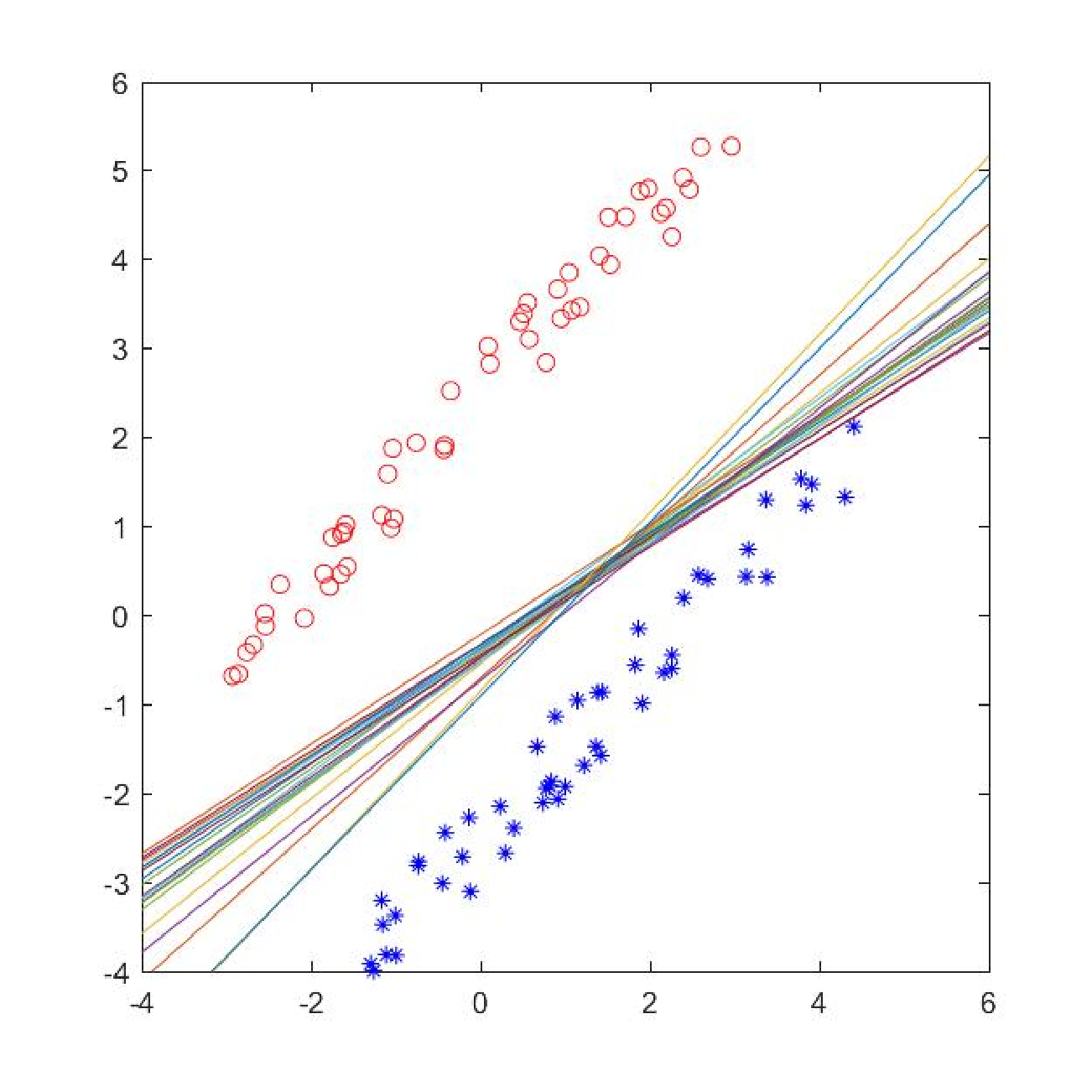}
	\includegraphics[width=1.5in]{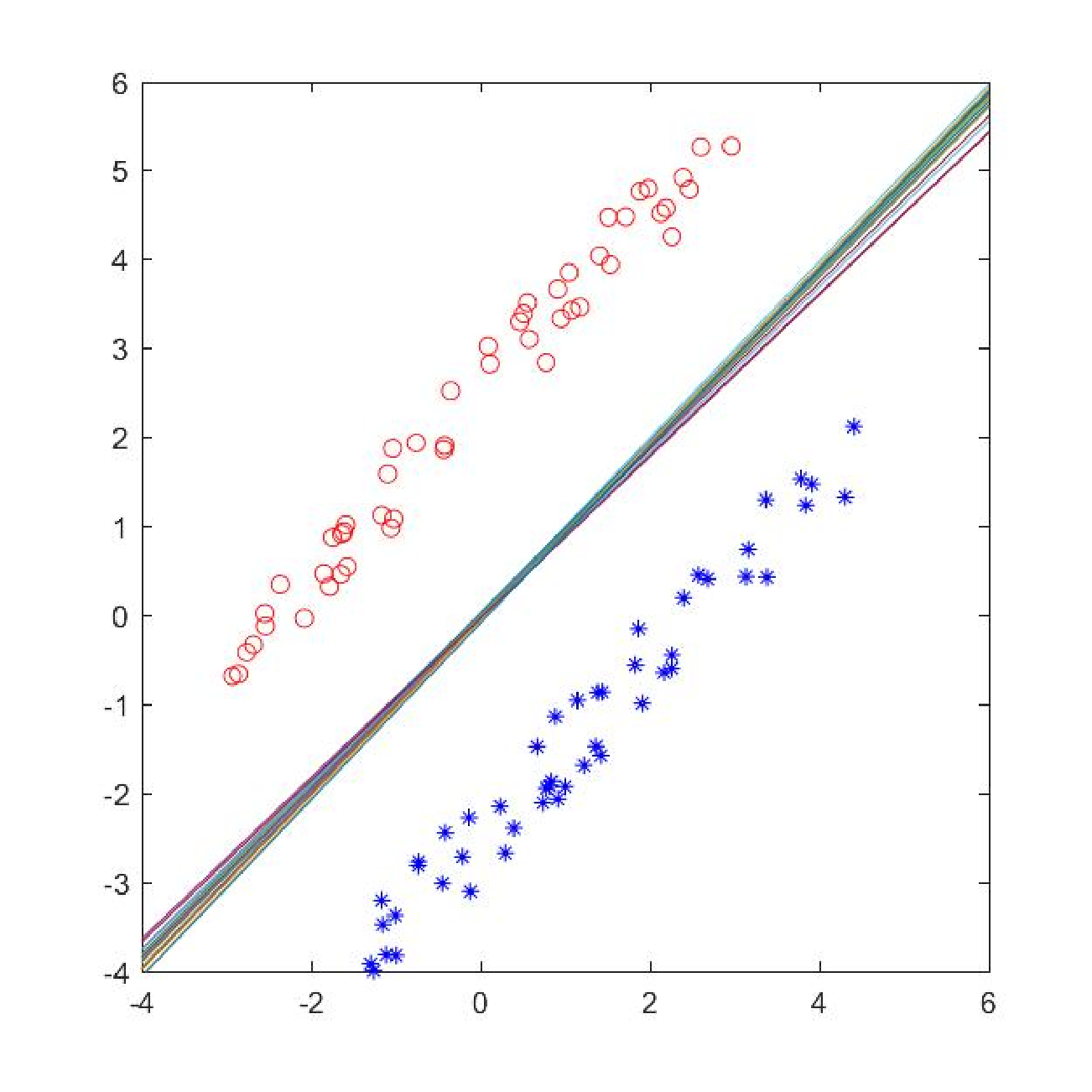}
	\caption{The separating SVM lines by different agents at time $t=1.25$~sec (Left) and time $t=2.5$ sec (Right).} \label{fig_line}
\end{figure}

\textbf{Comparison with the literature:} We compare our quantized solution ($\rho = 0.125$) with some existing linear \cite{dsvm} and nonlinear distributed optimization solutions; namely, the ones by Li \emph{et. al.}~\cite{li2020time} and Ning \emph{et. al.} \cite{ning2017distributed}. These use nonlinear sign-based protocols to improve the convergence rate. The multi-agent network is considered an undirected cycle. The simulation is given in Fig.~\ref{fig_compare}-Left. Note that the advantage of our proposed log-quantized protocol is that it can handle log-scale quantization with no optimality gap while the other three solutions cannot handle quantization. Clearly from the figure, our algorithm converges toward an optimal solution in the \textit{presence of quantization} (the residual decays toward zero) with similar performance as compared to the other \textit{ideal methods with no quantization}. 
\begin{figure} [hbpt]
	\centering
	\includegraphics[width=1.71in]{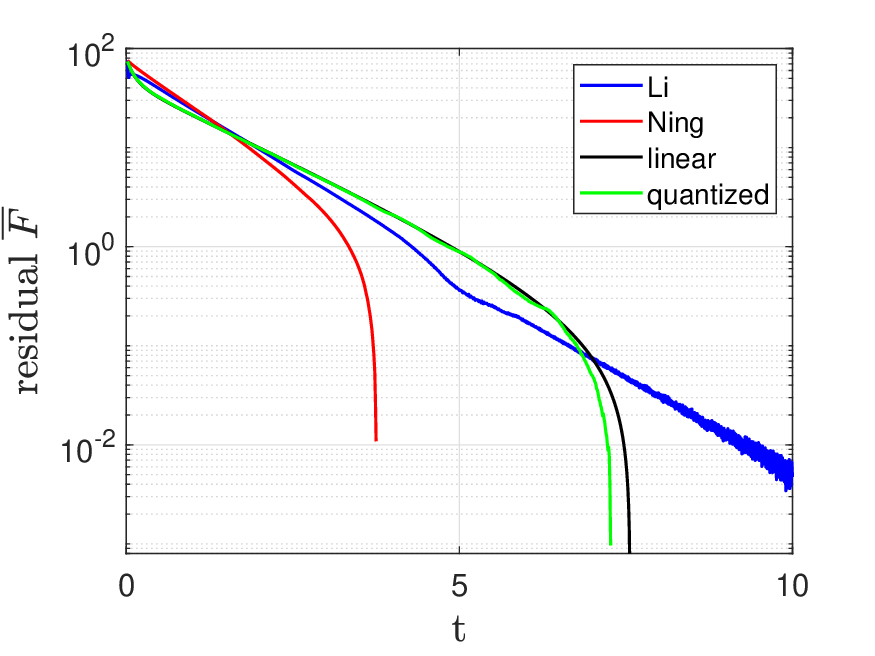}
		\includegraphics[width=1.71in]{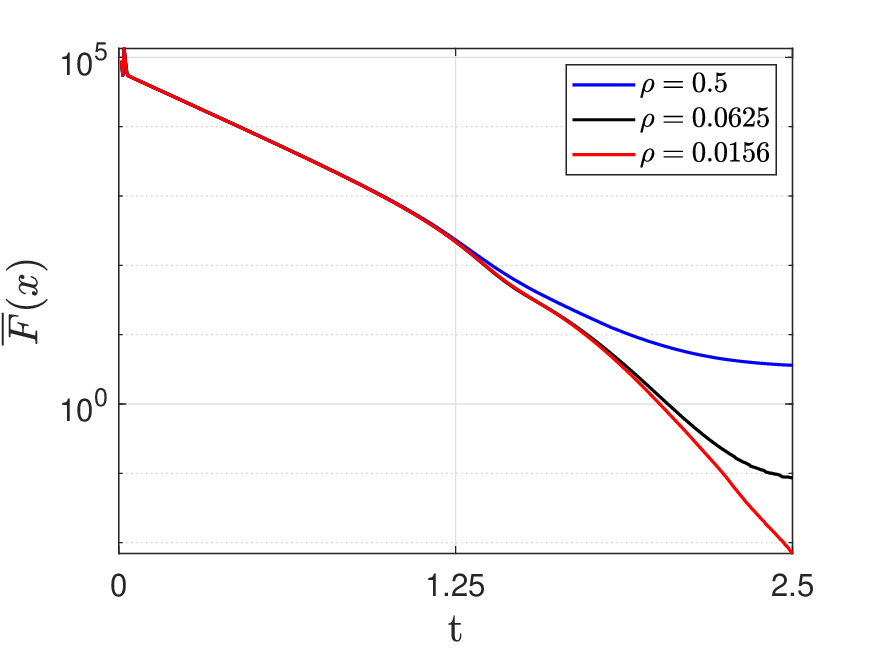}
	\caption{(Left) Comparison between different distributed optimization solutions with our proposed quantized setup, (Right) Comparison of the residual convergence for different logarithmic quantization levels $\rho$.} \label{fig_compare}
\end{figure}
For the next simulation, in Fig.~\ref{fig_compare}-Right we compare the convergence rate of the residual function (optimality gap) $\overline{F}(\mb{x})=F(\mb{x})-F(\mb{x}^*)$ for different log quantization levels $\rho$ in~\eqref{eq_quan_log}. The residual value decreases over time, and its decay rate increases for smaller values of $\rho$ (i.e., finer quantization level). In other words,  smaller $\rho$ results in faster convergence.

\textbf{Comparison with uniform quantizer:} For the last simulation, we compare the residual cost $\overline{F}(\mb{x})$ and convergence under uniform quantization versus log-scale quantization. Recall that uniform quantization is defined as
$q_u(z) = \rho\left[\dfrac{z}{\rho}\right],
$
where $\rho$ is the uniform quantization level and $[\cdot]$ denotes rounding to the nearest integer.
For the simulation, we set the quantization level (for both uniform and log-scale) equal to $\rho = 0.125$ and repeat the SVM hinge loss optimization under both quantization scenarios. The residual costs are compared in Fig.~\ref{fig_uni}. As it is clear from the figure, the solution under uniform quantization results in a larger steady-state residual than the logarithmic quantization. This is well-established in the literature that uniform quantization results in some optimality gap. This is because of the constant quantization error in the uniform case which adds gradient error near the optimal point, while in the log-scale quantization, we have finer quantization near the optimal point and less gradient error that converges to zero. 
\begin{figure} [hbpt!]
	\centering
	\includegraphics[width=1.7in]{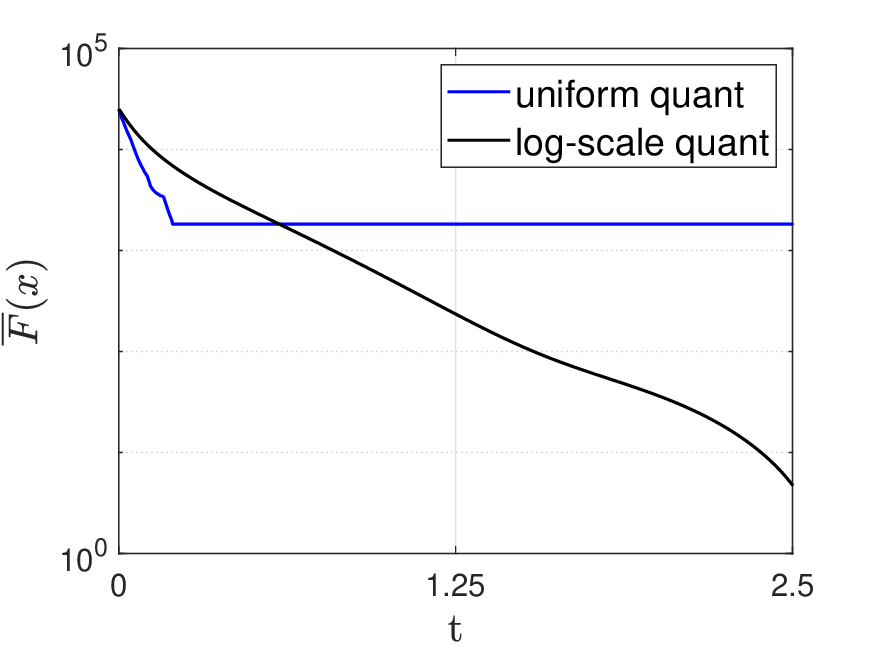}
	\caption{Residual under uniform vs. log-scale quantization: Clearly, optimization under uniform quantization results in large optimality gap.} \label{fig_uni}
\end{figure}

\section{Conclusion and Future Research}\label{sec_con}
We have presented a novel approach to distributed optimization over dynamic multi-agent networks using logarithmic quantization. Our method addresses the critical challenge of limited bandwidth in real-world networks while maintaining high precision in optimization tasks. 
Finding the convergence rate for general costs satisfying Assumption~\ref{ass_cost} and under log-quantized nonlinearity is our direction of future research.

\section*{Acknowledgement}
This work is funded by Semnan University, research grant No. 226/1403/1403214.

\bibliographystyle{IEEEbib}
\bibliography{bibliography}

\end{document}